\newcommand\id{\mathbbm{1}}
\newcommand{\ack}{\subsection*{\normalsize \sf 
		\textbf{Acknowledgements}}}
\newtheorem{theorem}{Theorem}
\newtheorem{lemma}{Lemma}
\newtheorem*{lemma*}{Lemma}
\newtheorem{definition}{Definition}
\newtheorem{corollary}{Corollary}
\newtheorem*{theorem*}{Theorem}
\begin{document}
	
	\title{No-go theorem for quantum realization of extremal correlations}
	
	\author{Sujan V.K}
	\email{sujan.v\textunderscore k@students.iiserpune.ac.in}
	\affiliation{Indian Institute of Science Education and Research, Pune 411008, India}
	
	\author{Ravi Kunjwal}
	\email{quaintum.research@gmail.com}
	\affiliation{Universit\'e libre de Bruxelles, QuIC, Brussels, Belgium}
	\affiliation{Aix-Marseille University, CNRS, LIS, Marseille, France}
	
	\begin{abstract}
		The study of quantum correlations is central to quantum information and foundations. The paradigmatic case of Bell scenarios considers product measurements implemented on a multipartite state. The more general case of contextuality scenarios---where the measurements do not have to be of product form or even on a composite system---has been studied for the case of projective measurements. While it is known that in any Bell scenario extremal indeterministic correlations (\textit{e.g.}, Popescu-Rohrlich or PR boxes) are unachievable quantumly, the case of general contextuality scenarios has remained open. Here we study quantum realizations of extremal correlations in arbitrary contextuality scenarios and prove that, for all such scenarios, no extremal indeterministic correlation can be achieved using projective quantum measurements, \textit{i.e.}, there exists no quantum state and no set of projective measurements, for any contextuality scenario, that can achieve such correlations. 
		This no-go result follows as a corollary of a more general no-go theorem that holds when the most general set of quantum measurements (\textit{i.e.}, positive operator-valued measures, or POVMs) is taken into account. This general no-go theorem entails that no non-trivial quantum realization of an extremal indeterministic correlation exists, \textit{i.e.}, any ``quantum" realization must be simulable by classical randomness. We discuss implications of this no-go theorem and the open questions it raises.
	\end{abstract}
	\maketitle
	
\section{Introduction}
    The intrinsic indeterminism of quantum theory is key to its ability to provide quantum advantage in any scenario, whether physical,  informational, or computational. This intrinsic indeterminism follows from the Born rule which, in turn, can be obtained via the Busch-Gleason theorem \cite{Busch03,Gleason57} once we fix the structure of quantum measurements to be positive operator-valued measures (POVMs) and require some simple constraints on how probabilities are assigned to the outcomes of these measurements. Bell's theorem \cite{Bell64,Bell66,Bell76,Bellbook} and the Kochen-Specker theorem \cite{KS67} provide the strongest witnesses of such indeterminism using only a finite set of measurements.
	
	This raises a natural question: how far does the intrinsic indeterminism of quantum theory go? More precisely, while we know that a deterministic account of quantum correlations is impossible under assumptions like Bell-locality or KS-noncontextuality, what do we know about how extremal the intrinsic indeterminism of quantum theory can be? Here, by `extremal' we mean convexly extremal, \textit{i.e.}, a correlation which is extremal cannot be decomposed into a probabilistic mixture of other correlations. Bell-local correlations that are deterministic---that is, vertices of the Bell polytope---are an example of extremal \textit{deterministic} correlations. On the other hand, Bell-nonlocal vertices of the nonsignaling polytope in Bell scenarios correspond to extremal \textit{indeterministic} correlations. In Ref.~\cite{RTH16} it was shown that in Bell scenarios the intrinsic indeterminism of quantum theory cannot be extremal, \textit{i.e.}, Bell-nonlocal vertices of the nonsignaling polytope are unachievable via quantum strategies. 
	
	Bell scenarios correspond to a very special kind of set-up: a multipartite state shared across $N$ $(\geq 2)$ parties that can each implement local measurements on their share of the state but where no party can signal to any other party. The nonsignaling assumption in Bell scenarios is physically enforced by an appeal to relativistic causality, \textit{i.e.}, by ensuring spacelike separation between the $N$ parties during the course of their local measurements. It is possible to abstract away from this physical picture to a combinatorial one by representing Bell scenarios using hypergraphs called \textit{contextuality scenarios} \cite{AFL15}, also known as \textit{test spaces} in quantum logic \cite{Wilce09,Greechie71,FR72,FR73}.\footnote{Hypergraphs are a generalization of graphs: while each edge in a graph connects two vertices, each hyperedge in a hypergraph can connect an arbitrary number of vertices.} In the case of Bell scenarios, the vertices in such a hypergraph correspond to multipartite measurement outcomes and the hyperedges correspond to the normalization and nonsignaling conditions satisfied by the probabilities that can be assigned to these outcomes. However, contextuality scenarios capture far more general set-ups than those of  Bell scenarios, \textit{i.e.}, scenarios where, on a fixed state, many different measurements can be carried out. In particular, contextuality scenarios do not assume that the state is multipartite or that the measurements are local with respect to multipartite states. The vertices in this general case represent measurement outcomes and the hyperedges represent measurement settings. When a vertex appears in two hyperedges, it represents the fact that it occurs with the same probability for either measurement setting, regardless of the state on which these measurements are implemented, \textit{e.g.}, a projection that appears in two different  projection-valued measures (PVMs, or projective measurements).
	
	In this article we consider arbitrary contextuality scenarios and ask:\\
	
	\textit{Is a quantum realization of an extremal indeterministic correlation possible in any contextuality scenario?}\\
	
	We show that such a realization is  impossible if we require the measurements to be projective and, in the general case of positive operator-valued measures (POVMs), no quantum realization can be non-trivial, \textit{i.e.}, it is always possible to simulate it by discarding the quantum system at hand and using a classical source of randomness. While the Busch-Gleason \cite{Busch03,Gleason57} and Kochen-Specker (KS) theorems \cite{KS67} show that no quantum state can yield extremal deterministic correlations for \textit{all} quantum measurements, our no-go theorem (\autoref{thm:4}) shows that no quantum state can yield extremal indeterministic correlations for \textit{any} set of quantum measurements. In the special case of Bell scenarios, we recover the main result of Ref.~\cite{RTH16}, namely, no Bell-nonlocal vertices of the nonsignaling polyope in any Bell scenario are realizable with quantum measurements.
	
	It is quantum information folklore---often termed the ``Church of the Larger Hilbert space"---that, fundamentally, one need not worry about the POVMs, mixed states, and non-unitary dynamics of operational quantum theory because they can, without loss of generality, be replaced by PVMs, pure states, and unitary evolutions as long as one can freely extend the system's Hilbert space to include sufficiently large auxiliary systems. The mathematical basis of this folklore are the dilation theorems of Stinespring and Naimark. This reasoning works well in Bell scenarios---where POVMs can always be replaced by PVMs in this manner without ever violating the nonsignaling constraints---but it falters in general contextuality scenarios where replacing POVMs by PVMs is not always possible without doing violence to the assumptions underlying the contextuality scenario. We discuss this apparent ``failure" of Naimark dilations in view of our no-go theorem, pointing to examples in the literature. Related limitations of Naimark's dilation theorem were also discussed in Ref.~\cite{HFR14}.
	
	\section{Preliminaries}
We define below the relevant notions that we will need to prove our results, largely drawing on the terminology of Ref.~\cite{AFL15}.

\begin{definition}[Contextuality scenario]\label{def:1}
	A \emph{contextuality scenario} is a hypergraph $H$ with set of vertices $V(H)$ and set of edges $E(H) \subseteq 2^{V(H)}$ such that
	
	\begin{align}
		\bigcup_{e \in E(H)} e = V(H).
	\end{align}
	
\end{definition}

\begin{definition}[Probabilistic model]\label{def:2}
	Let $H$ be a contextuality scenario. A \emph{probabilistic model} on $H$ is an assignment $p : V(H) \to [0,1] $ of a probability $p(v)$ to each vertex $v \in V(H)$ such that 
	
	\begin{align}
		\sum_{v \in e} p(v) = 1 \quad \forall e \in E(H).
	\end{align}
	We denote the set of (general) probabilistic models on $H$ by $\mathcal{G}(H)$. This set defines a convex polytope and we refer to the vertices of this polytope as \emph{extremal probabilistic models}.
\end{definition}

\begin{definition}[Induced subscenario]\label{def:indsub}
	Given a contextuality scenario $H$ and a subset of its vertices $S\subseteq V(H)$, $S$ induces a subscenario of $H$---the induced subscenario $H_S$---which is a contextuality scenario  with
	\begin{align}
		V(H_S):=S,\quad E(H_S):=\{e\cap S: e\in E(H)\}.
	\end{align}
\end{definition}

\begin{definition}[Extension of a probabilistic model]\label{def:extension}
	Given a contextuality scenario $H$ and its subscenario $H_S$ induced by $S\subseteq V(H)$, every probabilistic model $p_S\in\mathcal{G}(H_S)$ defines a probabilistic model $p\in\mathcal{G}(H)$, called the \textit{extension of $p_S$}, given by
	\begin{align}
		p(v):=\begin{cases}
			p_S(v) &\emph{ if }v\in S,\\
			0 &\emph{ if }v\notin S.
		\end{cases}
	\end{align}
\end{definition}

\begin{definition}[Incidence matrix of a contextuality scenario]\label{def:incmat}
	The incidence matrix $A_H$ of a contextuality scenario $H$ is a $|E(H)|\times |V(H)|$ matrix with entries in $\{0,1\}$ given by 
	\begin{align}
		[A_H]_{(e,v)}:=\begin{cases}
			1\emph{ if }v\in e,\\
			0\emph{ if }v\notin e,
		\end{cases}
	\end{align}
	where $v\in V(H)$ and $e\in E(H)$.
\end{definition}
Note that for any $p\in\mathcal{G}(H)$, understood as a $|V(H)|\times 1$ column vector, we have that $A_H p=\textbf{1}$, \textit{i.e.}, the incidence matrix captures the normalization constraints of $H$ and thus provides an equivalent representation of $H$. Here $\textbf{1}$ represents the $|V(H)|\times 1$ column vector with all entries $1$. We will also use $\textbf{0}$ to represent the $|V(H)|\times 1$ column vector with all entries $0$.

Below we recall a theorem characterizing extremal probabilistic models on any contextuality scenario from Ref.~\cite{AFL15} that will be crucial for our main result.
\begin{theorem}[Theorem 2.5.3 in \cite{AFL15}]\label{thm:1}
	A probabilistic model $p$ on a contextuality scenario $H$, \textit{i.e.}, $p \in \mathcal{G}(H)$, is extremal if and only if it is the extension of $p_S \in \mathcal{G}(H_S)$, where $H_S$ is some induced subscenario of H which has $p_S$ as its unique probabilistic model (that is, $\mathcal{G}(H_S)=\{p_S\}$).
\end{theorem}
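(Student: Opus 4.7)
The plan is to prove the two implications separately, and in both the natural choice of vertex subset will be the support of $p$, namely $S := \{v \in V(H) : p(v) > 0\}$. The overall intuition is that extremality of $p$ in $\mathcal{G}(H)$ should correspond to rigidity of its restriction $p_S$ inside $\mathcal{G}(H_S)$: there must be no room to wiggle $p_S$ without leaving the polytope.

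For the ``only if'' direction, I would first verify that $p_S := p|_S$ is a well-defined element of \(\mathcal{G}(H_S)\) whose extension is $p$; this follows immediately from \autoref{def:indsub} and \autoref{def:extension}, because each edge $e \cap S$ of $H_S$ inherits normalization from $e \in E(H)$ since $p$ vanishes on $V(H)\setminus S$. The substantive step is to show $p_S$ is the \emph{unique} element of $\mathcal{G}(H_S)$, which I would argue by contradiction. Suppose some $q_S \neq p_S$ also lies in $\mathcal{G}(H_S)$, set $d := p_S - q_S \neq \textbf{0}$, and note $A_{H_S} d = \textbf{0}$, so that $p_S \pm \epsilon d$ continues to satisfy $A_{H_S} x = \textbf{1}$ for every $\epsilon$. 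Because $p_S(v) > 0$ for every $v \in S$ (this is exactly why $S$ is chosen to be the support of $p$), for sufficiently small $\epsilon > 0$ both $p_S + \epsilon d$ and $p_S - \epsilon d$ remain entrywise non-negative, hence lie in $\mathcal{G}(H_S)$. Extending them trivially to $V(H)$ yields two distinct elements $p^{+}, p^{-} \in \mathcal{G}(H)$ whose midpoint is $p$, contradicting extremality.

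For the ``if'' direction, assume $p$ is the extension of the unique $p_S \in \mathcal{G}(H_S)$, and suppose $p = \alpha q + (1-\alpha) r$ for some $q, r \in \mathcal{G}(H)$ with $\alpha \in (0,1)$. Since $p(v) = 0$ for $v \notin S$ and $q, r$ are non-negative, both $q$ and $r$ must vanish outside $S$. The restrictions $q_S := q|_S$ and $r_S := r|_S$ then automatically satisfy the normalization constraints of $H_S$, because for every $e \in E(H)$ we have $\sum_{v \in e \cap S} q(v) = \sum_{v \in e} q(v) = 1$, and likewise for $r$. Uniqueness of $p_S$ in $\mathcal{G}(H_S)$ forces $q_S = r_S = p_S$, and together with vanishing outside $S$ this yields $q = r = p$, establishing extremality.

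The only genuinely non-routine ingredient is the perturbation argument in the first direction, and the obstacle it navigates is ensuring that $p_S \pm \epsilon d$ remain entrywise non-negative. Choosing $S$ to be precisely the support of $p$ is what makes this work, since it guarantees strict positivity of $p_S$ on $S$ and therefore leaves room for the perturbation in both directions; every other step reduces to a direct unpacking of the definitions of probabilistic models, induced subscenarios, and extensions.
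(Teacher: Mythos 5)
Your proposal is correct; note that the paper itself states this theorem without proof, importing it as Theorem 2.5.3 of Ref.~\cite{AFL15}, and your argument is essentially the standard one given there: restrict to the support $S=\{v: p(v)>0\}$ and use a two-sided perturbation $p_S\pm\epsilon d$ to derive uniqueness from extremality, and conversely use positivity to force any convex decomposition of $p$ to live on $S$, where uniqueness collapses it. The only cosmetic omission is the upper bound $p_S(v)+\epsilon d(v)\leq 1$, which follows immediately from non-negativity plus edge normalization since every vertex of $H_S$ lies in some edge.
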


Probabilistic models on contextuality scenarios are interesting from the perspective of quantum theory because they allow us to represent classical and quantum correlations within the same combinatorial framework. Specifically, following Ref.~\cite{AFL15}, one can define the set of classical and quantum models as follows.

\begin{definition}[Classical models]\label{def:classmod}
	Given a contextuality scenario $H$, $p\in\mathcal{G}(H)$ is said to be a \emph{classical model} if it admits a convex decomposition into deterministic models $p_k\in\mathcal{G}(H)$, where $p_k(v)\in\{0,1\}$ for all $v\in V(H)$ and for all $k$,  such that 
	\begin{align}
		\forall v\in V(H): p(v)=\sum_k q_k p_k(v),
	\end{align}
	where $q_k\geq 0$ for all $k$ and $\sum_kq_k=1$. We denote the set of classical models on $H$ by $\mathcal{C}(\mathcal{H})$.
\end{definition}
In terming these models ``classical" models, we are following the terminology of Ref.~\cite{AFL15}. It is more precise, however, to refer to these models as \textit{KS-noncontextual models} since KS-noncontextuality is the precise notion of classicality that their definition captures \cite{Kunjwal16,Kunjwal19,WK23}.

\begin{definition}[Quantum models]\label{def:3}
	Given a contextuality scenario $H$, $p\in\mathcal{G}(H)$ is a \emph{quantum model} if there exists a Hilbert space $\mathcal{H}$, a quantum state $\rho \in \mathcal{B}(\mathcal{H})$, and a projection operator  
	$P_v \in \mathcal{B}(\mathcal{H})$ associated to every $v \in V(H)$, such that every hyperedge $e\in E(H)$ forms a projective measurement, \textit{i.e.},
	\begin{align}
		\sum_{v \in e} P_v = \id\quad \forall e \in E(H),
	\end{align}
	and this construction reproduces the probabilistic model $p$, \textit{i.e.},
	\begin{align}
		p(v) = \mathrm{tr}(\rho P_v) \quad \forall v \in V(H).
	\end{align}
	We denote the set of quantum models on $H$ by $\mathcal{Q}(H)$, where $\mathcal{C}(H)\subseteq\mathcal{Q}(H)\subseteq\mathcal{G}(H)$.
\end{definition}

We now define the notion of quantum realization of a  probabilistic model (which may not be a quantum model in the sense of \autoref{def:3}).

\begin{definition}[Quantum realization of a probabilistic model]\label{def:4}
	Given a contextuality scenario $H$, a \emph{quantum realization} of $p\in\mathcal{G}(H)$ is given by a Hilbert space $\mathcal{H}$, a quantum state $\rho\in\mathcal{B}(\mathcal{H})$, and a set of positive semidefinite operators $ \{E_v\in\mathcal{B}(\mathcal{H})\}_{v\in V(H)}$, \textit{i.e.},
	\begin{align}
		(\mathcal{H},\rho, \{E_v\}_{v\in V(H)}),
	\end{align}
	such that 
	\begin{align}
		\sum_{v\in e} E_v= \id\quad \forall\, e \in E, \textrm{ and }
	\end{align}
	\begin{align}
		p(v)= \mathrm{tr}\!\left(\rho\, E_v\right) \quad \forall v\in V(H).
	\end{align}
\end{definition}
Clearly, every quantum model (by definition) admits a quantum realization where the positive operators are projectors. However, the converse is not true, \textit{i.e.}, there exist probabilistic models $p\in\mathcal{G}(H)\backslash\mathcal{Q}(H)$ that admit a quantum realization. In fact, \textit{all} probabilistic models $p\in\mathcal{G}(H)$ admit the following trivial quantum realization:
\begin{align}
	(\mathcal{H},\rho,\{p(v)\id\}_{v\in V(H)}),
\end{align}
for arbitrary choices of $\mathcal{H}$ (with dimension $\geq 1$) and $\rho$. However, there also exist non-trivial quantum realizations of probabilistic models that are not in $\mathcal{Q}(H)$, \textit{e.g.}, in the case of Specker's scenario \cite{Specker60,Specker60_1,Specker60_2,GKS18}, where one can come up with non-trivial quantum realizations that exhibit quantum correlations stronger than what is possible with projective measurements, such as the violation of the Liang-Spekkens-Wiseman (LSW) inequality \cite{LSW11,KG14,Kunjwal15,ZCL17}. Below we formalize the notion of a trivial quantum realization:

\begin{definition}[Trivial quantum realization of a probabilistic model]\label{def:trivialq}
	Given a contextuality scenario $H$, a \emph{quantum realization} $(\mathcal{H},\rho, \{E_v\}_{v\in V(H)})$ of $p\in\mathcal{G}(H)$ is said to be trivial if it takes the following general form:
	\begin{align}
		\rho&:=\Lambda_r \oplus 0_{d-r},\\
		E_v&:=p(v)\id_r \oplus E^{(d-r)}_v\quad\forall v\in V(H),
	\end{align}
	where $r$ is the rank of $\rho$, $d$ is the dimension of $\mathcal{H}$, and in the eigenbasis of $\rho$ we have $\Lambda_r:=\emph{diag}(\lambda_1,\lambda_2,\dots,\lambda_r)$, $0_{d-r}:=\emph{diag}(0,0,\dots,0)$, $\id_r:=\emph{diag}(1,1,\dots,1)$, and $E_v^{(d-r)}$ are positive operators subject only to the normalization constraints $\sum_{v\in e}E_v^{(d-r)}=\id_{d-r}$ for all $e\in E(H)$.
\end{definition}
It should be clear why we call such a quantum realization trivial: any probabilistic model $p\in\mathcal{G}(H)$ that can be reproduced in the above manner on a Hilbert space of dimension $d$ can also be reproduced by the following quantum realization on a Hilbert space of dimension $1\leq r\leq d$:
\begin{align}
	(\mathbb{C}^r,\Lambda_r,\{p(v)\id_r\}_{v\in V(H)}),
\end{align}
which is equivalent to a measurement device just ignoring the quantum system at hand and generating the probabilistic model using a classical source of randomness that is \textit{independent} of the quantum state.\footnote{It suffices to imagine that the quantum measurement is secretly a classical random number generator that is pre-programmed, for each setting $e\in E(H)$, to produce the outcome $v\in e$ with probability $p(v)$---independent of $e$, as required by \autoref{def:2}---\textit{regardless} of the particular quantum state fed to it. Such a ``measurement" is completely noisy, \textit{i.e.}, we can learn nothing about the quantum state by implementing it. Note that here we use the word ``classical" in a different sense---namely, ``implementable using only classical resources"---than in \autoref{def:classmod}, where it has a more constrained meaning, namely, ``implementable using only classical resources in a KS-noncontextual manner". In the language of the ontological models framework \cite{HS10}, the former would be a `measurement noncontextual' ontological model and the latter would be a `KS-noncontextual' ontological model \cite{Spekkens05,LSW11,Kunjwal15,Kunjwal16}. Quantum theory respects measurement noncontextuality but contradicts KS-noncontextuality.}
	
	\section{Contextuality scenarios}
	
	\subsection{Key lemmas}
	We first state some key lemmas about contextuality scenarios that will allow us to prove our main result. We refer the reader to Appendix \ref{app:proofs} for proofs of these lemmas.
	\begin{lemma}\label{lem:zero}
		Given a quantum state $\rho$ and an effect $E$ on a Hilbert space of dimension $d$ such that $\Tr(\rho E)=0$, they admit the following general form:
		\begin{align}
			\rho = \begin{pmatrix}
				\Lambda_r & 0 \\
				0 & 0
			\end{pmatrix}, E=\begin{pmatrix}
				0 & 0 \\
				0 & E_{22}
			\end{pmatrix},
		\end{align}
		where $r$ is the rank of $\rho$, $\Lambda_r = \operatorname{diag}(\lambda_1, \ldots, \lambda_r)$ with $\lambda_k > 0$ for $k\in\{1,2,\dots,r\}$, and $E_{22}$ is a $(d-r)\times(d-r)$ positive semidefinite matrix.
	\end{lemma}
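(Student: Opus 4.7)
The plan is to work in the eigenbasis of $\rho$ and exploit positivity of $E$ to force the relevant blocks to vanish. First I would diagonalize $\rho$, so that in its eigenbasis $\rho=\Lambda_r\oplus 0_{d-r}$ with $\Lambda_r=\mathrm{diag}(\lambda_1,\dots,\lambda_r)$ and $\lambda_k>0$ for all $k\in\{1,\dots,r\}$. Writing $E$ in the same basis as the block matrix
\begin{align}
E=\begin{pmatrix} E_{11} & E_{12}\\ E_{12}^{\dagger} & E_{22}\end{pmatrix},
\end{align}
the trace condition reduces to $\Tr(\rho E)=\Tr(\Lambda_r E_{11})=\sum_{k=1}^{r}\lambda_k [E_{11}]_{kk}=0$.

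Next I would use that $E\succeq 0$ implies its principal submatrix $E_{11}\succeq 0$, so $[E_{11}]_{kk}\geq 0$ for every $k$. Combined with $\lambda_k>0$, each term in the sum is non-negative, forcing $[E_{11}]_{kk}=0$ for all $k\in\{1,\dots,r\}$. A standard consequence of positive semidefiniteness then yields $E_{11}=0$: by Cauchy--Schwarz applied to the inner product induced by $E_{11}$, $|[E_{11}]_{ij}|^{2}\leq [E_{11}]_{ii}[E_{11}]_{jj}=0$, so every entry vanishes.

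Finally I would argue that the off-diagonal block $E_{12}$ must also vanish. The cleanest way is to note that $E\succeq 0$ together with $E_{11}=0$ means that any vector $\ket{\psi}$ supported on the first $r$ coordinates satisfies $\bra{\psi}E\ket{\psi}=0$; since $E$ is PSD, such a vector must lie in $\ker E$, so $E\ket{\psi}=0$, which forces $E_{12}^{\dagger}\ket{\psi}=0$ for all such $\ket{\psi}$, hence $E_{12}=0$. Putting these pieces together gives exactly the block form stated in the lemma, with $E_{22}\succeq 0$ inherited from positivity of $E$.

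I do not anticipate a real obstacle: the argument is essentially the two standard facts that (i) a PSD matrix with vanishing diagonal must be zero, and (ii) a PSD block matrix with a zero diagonal block has the corresponding off-diagonal blocks zero. The only point requiring mild care is making explicit that working in the eigenbasis of $\rho$ is without loss of generality, since the statement of the lemma is basis-dependent in appearance but invariant under the choice of eigenbasis for $\rho$.
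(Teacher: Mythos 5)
Your proof is correct, but it takes a different route from the paper's. The paper first establishes an auxiliary result (\autoref{lem:tracezero}): for positive semidefinite $A,B$, $\Tr(AB)=0$ implies $AB=0$ (via $C=A^{1/2}B^{1/2}$ and $\Tr(C^\dagger C)=0\Rightarrow C=0$). It then computes the block form of $\rho E$ in the eigenbasis of $\rho$, obtaining $\Lambda_r E_{11}=0$ and $\Lambda_r E_{12}=0$ simultaneously, and invertibility of $\Lambda_r$ kills both blocks in one step. You bypass that operator identity entirely and argue entry-wise: the trace condition forces the diagonal of the principal block $E_{11}$ to vanish, Cauchy--Schwarz on the PSD form then gives $E_{11}=0$, and the standard fact that a PSD block matrix with a vanishing diagonal block has vanishing adjacent off-diagonal blocks gives $E_{12}=0$. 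Both arguments are elementary and complete; the paper's is slightly more compact because $\rho E=0$ delivers $E_{11}=0$ and $E_{12}=0$ together, whereas yours requires two separate (though standard) PSD facts. Your kernel argument for $E_{12}=0$ (that $\bra{\psi}E\ket{\psi}=0$ with $E\succeq 0$ forces $E\ket{\psi}=0$) is the same mechanism the paper's auxiliary lemma encapsulates, just applied locally rather than globally.
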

	
	\begin{lemma}\label{lem:trivial}
		A positive semidefinite operator $E\geq 0$ on a Hilbert space $\mathcal{H}$ satisfies 
			$\Tr(\rho E)=\alpha$ for all quantum states $\rho\in\mathcal{B}(\mathcal{H})$ 
		if and only if $E=\alpha\id$.
	\end{lemma}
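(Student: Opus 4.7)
The plan is to prove the two directions separately, with all the substance lying in the ``only if'' direction. The ``if'' direction is a one-liner: if $E=\alpha\id$, then $\Tr(\rho E)=\alpha\Tr(\rho)=\alpha$ for every state $\rho$, using only $\Tr(\rho)=1$ and linearity of the trace.

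For the converse, the first move I would make is to narrow the universal quantifier from all states to pure states. Every unit vector $\ket{\psi}\in\mathcal{H}$ defines a legitimate density operator $\rho=\ket{\psi}\bra{\psi}$, so the hypothesis specializes to $\bra{\psi}E\ket{\psi}=\alpha$ for all unit $\ket{\psi}$. Setting $A:=E-\alpha\id$, which is Hermitian because $E$ is, this rewrites as $\bra{\psi}A\ket{\psi}=0$ for every unit vector.

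The last step is the standard fact that a Hermitian operator whose expectation value vanishes on every pure state must itself be zero. I would invoke the spectral theorem: writing $A=\sum_i\lambda_i\ket{\phi_i}\bra{\phi_i}$ in an orthonormal eigenbasis and plugging $\ket{\psi}=\ket{\phi_i}$ into the vanishing quadratic form, one obtains $\lambda_i=0$ for each $i$, so $A=0$ and hence $E=\alpha\id$.

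There is essentially no obstacle here; both directions are elementary. The only conceptual point worth flagging is that restricting the hypothesis to pure states costs nothing, since every mixed-state expectation is a convex combination of pure-state expectations, so the pure-state specialization is an equivalent (and more efficient) handle on the problem. If one preferred to avoid the spectral theorem, the same conclusion would follow via the polarization identity, which recovers every matrix element $\bra{\phi}A\ket{\psi}$ from the diagonal values $\bra{\chi}A\ket{\chi}$.
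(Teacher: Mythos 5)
Your proposal is correct and follows essentially the same route as the paper: both reduce to pure states and apply the spectral theorem, testing the expectation value on eigenvectors to force every eigenvalue to equal $\alpha$ (your shift to $A=E-\alpha\id$ is only a cosmetic repackaging of the paper's direct diagonalization of $E$). No gaps.
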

	
	\begin{lemma}\label{lem:unique}
		A contextuality scenario $H$ that admits the unique probabilistic model $p$ (\textit{i.e.}, $\mathcal{G}(H)=\{p\}$) admits no other quantum realization besides the following family of trivial quantum realizations: 
		\begin{align}
			(\mathcal{H},\rho,\{E_v:=p(v)\id\}_{v\in V(H)}),
		\end{align}
		where $\dim\mathcal{H} \ge 1$ and the choice of state $\rho$ on $\mathcal{H}$ is arbitrary.
	\end{lemma}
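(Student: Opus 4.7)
The plan is to exploit the uniqueness of the probabilistic model to show that the measurement operators $\{E_v\}$ in any quantum realization must be completely state-independent, and then invoke \autoref{lem:trivial} to conclude that each $E_v$ equals $p(v)\id$.

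First, I would fix an arbitrary quantum realization $(\mathcal{H},\rho,\{E_v\}_{v\in V(H)})$ of $p$. The crucial observation is that the normalization conditions $\sum_{v\in e}E_v=\id$ for all $e\in E(H)$ are properties of the measurement operators alone; they do not depend on which state one feeds into the device. Thus, for \emph{any} quantum state $\sigma\in\mathcal{B}(\mathcal{H})$, the assignment
\begin{align}
p_\sigma:V(H)\to[0,1],\quad p_\sigma(v):=\Tr(\sigma E_v)
\end{align}
automatically satisfies $\sum_{v\in e}p_\sigma(v)=\Tr(\sigma\id)=1$ for every $e\in E(H)$, so $p_\sigma\in\mathcal{G}(H)$.

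Next, I would invoke the hypothesis that $\mathcal{G}(H)=\{p\}$. This forces $p_\sigma=p$ for every state $\sigma$, i.e., $\Tr(\sigma E_v)=p(v)$ for all $\sigma\in\mathcal{B}(\mathcal{H})$ and all $v\in V(H)$. Applying \autoref{lem:trivial} to each effect $E_v$ with $\alpha=p(v)$ then yields $E_v=p(v)\id$, which is exactly the claimed form. Since the original state $\rho$ played no role in the argument beyond being some valid state on $\mathcal{H}$, and since $\dim\mathcal{H}\geq 1$ is arbitrary, the family of trivial quantum realizations exhausts all possibilities.

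I do not expect a genuine obstacle here: the argument reduces cleanly to Lemma 2 once one recognizes that \emph{any} state plugged into the fixed POVMs must reproduce the unique model $p$. The only subtlety worth flagging explicitly is that this reasoning does not require the $E_v$ to be projectors, so the conclusion applies to the full POVM setting of \autoref{def:4} and not merely to quantum models in the sense of \autoref{def:3}.
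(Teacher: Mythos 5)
Your proposal is correct and follows essentially the same route as the paper's proof: observe that the fixed POVMs turn every state $\sigma$ into a probabilistic model $\Tr(\sigma E_v)$, use uniqueness of $p$ to force $\Tr(\sigma E_v)=p(v)$ for all $\sigma$, and then apply \autoref{lem:trivial}. Your version is, if anything, slightly more explicit than the paper's in spelling out why $p_\sigma$ lands in $\mathcal{G}(H)$.
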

	
	\begin{lemma}\label{lem:incidence}
		The equation $Ax = \textbf{0}$, where $A$ is the incidence matrix of a contextuality scenario $H$ with a unique probabilistic model $p$ satisfying $p(v)>0$ for all $v\in V(H)$, admits only the trivial solution, \textit{i.e.}, $x=\textbf{0}$.
	\end{lemma}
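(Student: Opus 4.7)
The plan is a contradiction argument that converts the set-theoretic uniqueness of $p$ into the algebraic statement $\ker A = \{\textbf{0}\}$ via a small perturbation inside the affine slice $\{p' : A p' = \textbf{1}\}$. The key observation is that strict positivity $p(v) > 0$ for all $v$ gives two-sided room to move along any direction in $\ker A$ without leaving the probability simplex.

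Concretely, I would suppose $x \neq \textbf{0}$ satisfies $A x = \textbf{0}$ and set $p_t := p + t x$ for $t \in \mathbb{R}$. First, $A p_t = A p + t\, A x = \textbf{1} + \textbf{0} = \textbf{1}$, so the hyperedge normalization constraints hold for every $t$. Next, since $p(v) > 0$ for every $v$ in the finite set $V(H)$, one can pick $t_0 > 0$ so that $p_t(v) > 0$ whenever $|t| < t_0$; and since by \autoref{def:1} every vertex lies in some edge, the equation $\sum_{w \in e} p_t(w) = 1$ combined with nonnegativity forces $p_t(v) \le 1$ automatically. Hence $p_t \in \mathcal{G}(H)$ for all $|t| < t_0$. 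Uniqueness $\mathcal{G}(H) = \{p\}$ then forces $p_t = p$, i.e., $t x = \textbf{0}$; choosing any nonzero $t \in (-t_0, t_0)$ yields $x = \textbf{0}$, contradicting our assumption.

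I do not anticipate a substantive obstacle in executing this plan: the argument reduces to a two-line perturbation once strict positivity of $p$ is unpacked. The one point worth flagging is that without strict positivity one would only be able to perturb in the tangent cone to the simplex at $p$, which would not suffice to conclude $x = \textbf{0}$; the hypothesis $p(v) > 0$ for every $v$ is doing exactly that work, and it is also what rules out the degenerate case of singleton edges (where $p(v) = 1$ would eventually force other entries of $p$ to vanish, contradicting positivity).
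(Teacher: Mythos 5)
Your proof is correct and takes essentially the same route as the paper's: perturb $p$ along a putative kernel direction $x$ to obtain a second valid probabilistic model, contradicting uniqueness, with strict positivity of $p$ supplying the two-sided room to move. Your observation that the upper bound $p_t(v)\leq 1$ comes for free from normalization plus nonnegativity is a mild streamlining of the paper's argument, which instead treats the vertices with $p(v)=1$ by a separate case analysis and computes an explicit maximal step size $c_0$ rather than a generic small $t_0$.
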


	\subsection{No-go theorem for non-trivial quantum realizations}
	With these lemmas in hand, we can now prove our main theorem:
	\begin{theorem}[No-go theorem for non-trivial quantum realizations]\label{thm:4}
		For any contextuality scenario $H$, quantum realizations of its extremal probabilistic models in $\mathcal{G}(H)$ must necessarily be trivial.
	\end{theorem}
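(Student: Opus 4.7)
The plan is to combine Theorem 1 with Lemmas 1, 3, and 4 to force any quantum realization of an extremal $p \in \mathcal{G}(H)$ into the block-diagonal trivial form of Definition 7. First, I would invoke Theorem 1 to produce an induced subscenario $H_S$ of which $p$ is the extension and on which $p_S := p|_S$ is the unique probabilistic model. Iteratively deleting any zero-probability vertex from $S$ preserves both the hyperedge normalization and the uniqueness of the probabilistic model, so I may assume $S = \{v \in V(H) : p(v) > 0\}$ and thus $p_S(v) > 0$ for every $v \in S$.

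Given a quantum realization $(\mathcal{H},\rho,\{E_v\}_{v\in V(H)})$ of $p$, the fact that $\Tr(\rho E_v) = 0$ for every $v \notin S$ combined with Lemma 1 yields a common block decomposition---the eigenbasis of $\rho$---in which $\rho = \Lambda_r \oplus 0_{d-r}$ and $E_v = 0 \oplus E_v^{(d-r)}$ for all $v \notin S$. For $v \in S$, I would write $E_v$ in the same basis as a $2 \times 2$ block matrix with upper-left block $F_v$, upper-right block $X_v$, and lower-right block $E_v^{(d-r)}$. Projecting the normalization $\sum_{v \in e} E_v = \id$ onto the upper-left block gives $\sum_{v \in e \cap S} F_v = \id_r$ for every $e \in E(H)$, so $(\mathbb{C}^r,\Lambda_r,\{F_v\}_{v\in S})$ is a quantum realization of $H_S$ reproducing $p_S$ (since $\Tr(\Lambda_r F_v) = \Tr(\rho E_v) = p_S(v)$). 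Lemma 3 then forces $F_v = p(v)\,\id_r$ for every $v \in S$.

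The main obstacle is that positive semidefiniteness of $E_v$ alone does not kill the off-diagonal blocks $X_v$---this is where Lemma 4 becomes essential. Projecting the normalization onto the upper-right block gives $\sum_{v \in e'} X_v = 0$ for every $e' \in E(H_S)$, using $X_v = 0$ already for $v \notin S$. Reading this one matrix entry at a time, the scalar vector $x_{ij} := ((X_v)_{ij})_{v \in S}$ solves $A_{H_S}\, x_{ij} = \textbf{0}$, and Lemma 4 (applicable precisely because $H_S$ has a unique, strictly positive probabilistic model) forces $x_{ij} = \textbf{0}$. Hence $X_v = 0$ for all $v \in S$, so $E_v = p(v)\,\id_r \oplus E_v^{(d-r)}$; combined with $\sum_{v \in e} E_v^{(d-r)} = \id_{d-r}$ from the lower-right block of $\sum_{v \in e} E_v = \id$, this matches Definition 7 verbatim, completing the argument.
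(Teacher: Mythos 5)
Your proposal is correct and follows essentially the same route as the paper: the eigenbasis block decomposition of $\rho$, Lemma \ref{lem:zero} for the zero-probability vertices, Lemma \ref{lem:unique} applied on $\mathrm{supp}(\rho)$ to pin the upper-left blocks to $p(v)\id_r$, and Lemma \ref{lem:incidence} via the incidence matrix of $H_S$ to kill the off-diagonal blocks. Your explicit trimming of $S$ down to $\{v : p(v)>0\}$ (checking that deleting zero-probability vertices preserves uniqueness) is a small point the paper glosses over, but otherwise the two arguments coincide.
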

	\begin{proof}
		Consider a quantum realization $$(\mathcal{H},\rho, \{E_v\}_{v\in V(H)})$$ of some extremal probabilistic
		model $p$ on some contextuality scenario $H$. We have $p(v) = \operatorname{Tr}(\rho E_v)$
		and without loss of generality, we can express $\rho$ and $\{E_v\}_{v\in V(H)}$ in the eigenbasis of $\rho$, \textit{i.e.},
		\begin{align}
			\rho = \begin{pmatrix}
				\Lambda_r & 0 \\
				0 & 0
			\end{pmatrix},
		\end{align}
		where $\Lambda_r$ is a diagonal positive definite $r \times r$ matrix  (and $r$=rank(\(\rho\))).
		
		For any measurement operator $E_v$ corresponding to $p(v)=0$ in the extremal model, we have $\operatorname{Tr}(\rho E_v) = 0$. This implies that $E_v$ must be orthogonal to the support of $\rho$ (\textit{cf.}~\autoref{lem:zero}):
		\begin{align}
			E_v= \begin{pmatrix}
				0 & 0 \\
				0 & E^{(v)}_{22}
			\end{pmatrix}   
		\end{align}
		
		Now note that, from \autoref{thm:1}, the subscenario induced by $p$ on $H$ (since $p$ is extremal)---say, $H_S$---admits the unique probabilistic model---say $p_S$---of which $p$ is an extension. Hence, positive operators corresponding to vertices with non-zero probabilities in the extremal model $p$ define the unique probabilistic model $p_S$ on the induced subscenario $H_S$ with vertices $S:=\{v\in V(H)| p(v)>0\}\subseteq V(H)$. Applying \autoref{lem:unique} to the support of $\rho$ (the upper-diagonal block), we conclude that on this subspace all the operators $\{E_v\}_{v\in S}$ must be proportional to $\id_r$. Specifically, if $E_v$ corresponds to probability $p(v)> 0$ in the extremal model, then
		\begin{align}
			E_v = \begin{pmatrix}
				p(v)\id_r & E^{(v)}_{12} \\
				E^{(v)}_{21} & E^{(v)}_{22}
			\end{pmatrix},
		\end{align}
		where $\id_r$ is the identity matrix on $\text{supp}(\rho)$.
		
		The off-diagonal blocks $E^{(v)}_{12}$ and $E^{(v)}_{21}$ are constrained by the requirement that for each hyperedge $e\in E(H)$, the completeness relation must hold:
		\begin{align}
			\sum_{v\in e} E^{(v)}_{12} = \sum_{v\in e} E^{(v)}_{21}=0.
		\end{align}
		Let us collect all the $(i,j)$-th entries of all the matrices $\{E^{(v)}_{12}\}_{v\in S}$ in a vector $\textbf{x}_{ij}:=([E^{(v)}_{12}]_{ij})_{v\in S}$, where $i\in\{1,2,\dots,r\},j\in\{r+1,r+2,\dots,d\}$. Then we must have 
		\begin{align}
			\forall (i,j): A_{H_S}\textbf{x}_{ij}=\textbf{0},
		\end{align}
		where $A_{H_S}$ is the incidence matrix of $H_S$. Since $H_S$ admits a unique probabilistic model, we have from \autoref{lem:incidence} that $\textbf{x}_{ij}=\textbf{0}$ for all $(i,j)$ and, therefore, $E^{(v)}_{12}=0$ for all $v\in S$. Since $E^{(v)}_{21}=E^{(v)\dagger}_{12}$, we also have that $E^{(v)}_{21}=0$ for all $v\in S$.
		
		The lower-diagonal blocks $E^{(v)}_{22}$ corresponding to the null space of $\rho$ are unconstrained by the extremal probabilistic model since they do not contribute to the Born rule probabilities. The only requirement is their positive semidefiniteness and the fact that for each hyperedge $e\in E(H)$, the completeness relation must hold, \textit{i.e.},
			$E^{(v)}_{22}\geq 0\,\forall v\in V(H)$ and $\sum_{v\in e} E^{(v)}_{22} = \id_{d-r}\,\forall e\in E(H)$, where $\id_{d-r}$ is the identity matrix on $\text{null}(\rho)$.
		
		Therefore, any quantum realization that generates the extremal probabilistic model $p\in\mathcal{G}(H)$ must be of the form
		\begin{align}
			\left(\mathcal{H}, \Lambda_r \oplus 0_{d-r}, \{p(v) \id_r \oplus E^{(v)}_{22}\}_{v\in V(H)}\right),
		\end{align}
		\textit{i.e.}, such a realization must be trivial.
	\end{proof}
	The following corollary specializes the above theorem to the case of projective measurements:
	\begin{corollary}[No-go for projective quantum realizations]\label{cor:ctx}
		For any contextuality scenario $H$, the set of quantum models $\mathcal{Q}(H)$ excludes extremal indeterministic probabilistic models in $\mathcal{G}(H)$, \textit{i.e.}, it is impossible to realize an extremal probabilistic model that is indeterministic using projective quantum measurements.
	\end{corollary}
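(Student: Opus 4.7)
The plan is to derive the corollary directly from \autoref{thm:4} by observing that, once projectivity is imposed, the trivial form of the realization forces the probabilistic model to be deterministic. I would proceed by contrapositive: assume $p\in\mathcal{G}(H)$ is extremal and admits a projective quantum realization (so $p\in\mathcal{Q}(H)$ by \autoref{def:3}), and show that $p$ must then be deterministic, \emph{i.e.}, $p(v)\in\{0,1\}$ for every $v\in V(H)$.

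Concretely, the first step is to apply \autoref{thm:4} to obtain, in the eigenbasis of $\rho$, the trivial form $\rho=\Lambda_r\oplus 0_{d-r}$ and $E_v=p(v)\id_r\oplus E^{(v)}_{22}$ (with vanishing off-diagonal blocks). The second step is to impose the projectivity requirement $E_v^2=E_v$ coming from membership in $\mathcal{Q}(H)$. Because each $E_v$ is block-diagonal with respect to the splitting $\mathrm{supp}(\rho)\oplus\mathrm{null}(\rho)$, squaring acts block by block: the upper-left block gives $p(v)^2\id_r=p(v)\id_r$. Since $\rho$ is a nonzero state we have $r\ge 1$, so this forces $p(v)^2=p(v)$ and hence $p(v)\in\{0,1\}$ for every $v\in V(H)$. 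Therefore $p$ is deterministic, contradicting the assumption that $p$ was indeterministic, and the corollary follows.

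I do not anticipate a real obstacle here since the corollary is essentially a one-line algebraic consequence of \autoref{thm:4} together with the idempotency of projectors. The only mild care needed is in making sure the trivial realization from \autoref{thm:4} genuinely constrains each $E_v$ to have the block-diagonal form on $\mathrm{supp}(\rho)\oplus\mathrm{null}(\rho)$ (so that $E_v^2$ factors through the upper-left block cleanly), which is already encoded in the statement of \autoref{thm:4}. No auxiliary lemma beyond \autoref{thm:4} is needed.
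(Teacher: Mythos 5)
Your proposal is correct and follows essentially the same route as the paper: apply \autoref{thm:4} to obtain the block-diagonal trivial form, then impose idempotency $E_v^2=E_v$ so the upper-left block forces $p(v)^2=p(v)$ and hence $p(v)\in\{0,1\}$. Your added remark that $r\ge 1$ (so the upper-left block is nonempty) is a small point of care the paper leaves implicit, but the argument is otherwise identical.
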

	\begin{proof}
		This follows from the fact that the trivial quantum realization of \autoref{thm:4} is projective only if $p(v)\in\{0,1\}$ for all $v\in V(H)$. 
		More explicitly, we have
		\begin{align}
			E^2_v = \begin{pmatrix}
				p(v)^2\id_r & 0 \\
				0 & (E^{(v)}_{22})^2
			\end{pmatrix},
		\end{align}
		so that, for a projective realization, we must have  
		\begin{align}
			E^2_v=E_v\quad\forall v\in V(H),
		\end{align}
		that is,
		\begin{align}
			p(v)^2=p(v), (E^{(v)}_{22})^2=E^{(v)}_{22}\quad\forall v\in V(H).
		\end{align}
		This gives us the $p(v)\in\{0,1\}$ for all $v\in V(H)$ as a necessary condition for a quantum projective realization. Hence, no extremal indeterministic probabilistic model on $H$ admits a projective quantum realization.
	\end{proof}
	
	\subsection{Bell scenarios}
	We can now apply \autoref{thm:4} to the case of Bell scenarios---which form a subset of contextuality scenarios \cite{AFL15}---and provide an alternative, and arguably simpler, proof of the central result of Ref.~\cite{RTH16}, \textit{i.e.},	
	\begin{corollary}\label{cor:bell}
		No Bell-nonlocal vertex of the nonsignaling polytope in any Bell scenario admits a quantum realization.
	\end{corollary}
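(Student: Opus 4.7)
The plan is to apply \autoref{thm:4} directly, exploiting the tensor-product structure of Bell scenarios. Since a Bell scenario is a contextuality scenario (via the Foulis-Randall product) whose extremal probabilistic models are the vertices of the nonsignaling polytope, a Bell-nonlocal vertex $p$ is extremal and, being non-deterministic, indeterministic. I would assume, for contradiction, a Bell-scenario quantum realization of $p$ with state $\rho$ on $\mathcal{H}_1\otimes\cdots\otimes\mathcal{H}_N$ and local POVMs $\{E^{(i)}_{a_i|x_i}\}$, so that the joint effects take the product form $\bigotimes_i E^{(i)}_{a_i|x_i}$.

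By \autoref{thm:4}, this realization must be trivial, meaning on $\text{supp}(\rho)$ the joint effect acts as the scalar $p(a_1,\ldots,a_N|x_1,\ldots,x_N)\id_r$. The next step is to marginalize: for each party $i$, summing the triviality identity over $\{a_j\}_{j\neq i}$, using POVM completeness $\sum_{a_j}E^{(j)}_{a_j|x_j}=\id$ on the left and the nonsignaling marginal $p(a_i|x_i)$ on the right, gives that $E^{(i)}_{a_i|x_i}$ tensored with identity on the other parties also acts as a scalar on $\text{supp}(\rho)$, namely $p(a_i|x_i)\id_r$. In particular, each such single-party-tensored effect preserves $\text{supp}(\rho)$, and for distinct parties these operators commute, since they act on disjoint tensor factors.

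Multiplying these commuting scalar-acting operators on $\text{supp}(\rho)$ yields the full joint effect acting as $\prod_i p(a_i|x_i)\id_r$. Comparing this with the triviality value $p(a_1,\ldots,a_N|x_1,\ldots,x_N)\id_r$ forces
\begin{align}
p(a_1,\ldots,a_N|x_1,\ldots,x_N)=\prod_i p(a_i|x_i),
\end{align}
so $p$ factorizes as a product of its single-party marginals---a distribution that is manifestly Bell-local, contradicting the assumption that $p$ is Bell-nonlocal.

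The main obstacle I anticipate is the operator-algebraic step of passing from ``each single-party-tensored effect acts as a scalar on $\text{supp}(\rho)$'' to ``their product acts as the product of those scalars on $\text{supp}(\rho)$''. The key observation is that scalar action on a subspace implies invariance of that subspace under the operator, and pairwise commutativity (inherited from the disjoint tensor-factor structure) then permits the scalars to multiply freely. Once this is in hand, the remainder of the argument is an immediate application of \autoref{thm:4} together with the nonsignaling condition.
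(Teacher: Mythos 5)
Your proof is correct, and it follows the same overall strategy as the paper: invoke \autoref{thm:4} to force the realization into the trivial form, then use the tensor-product structure of the effects in a Bell scenario to conclude that $p(\vec{a}|\vec{x})=\prod_i p(a_i|x_i)$, which contradicts Bell-nonlocality. The execution of the middle step differs, though, in a way worth noting. The paper exploits the freedom in $\Lambda_r$ to replace the state by a rank-one projector and then reads off the factorization from the $(1,1)$ matrix entries of the local effects, $\Tr(\rho\,\bigotimes_i E_{a_i|x_i})=\prod_i [E_{a_i|x_i}]_{11}$ --- a computation that implicitly requires the chosen eigenvector of $\rho$ to be a product vector across the parties. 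You instead stay at the level of the block decomposition $\text{supp}(\rho)\oplus\text{null}(\rho)$: summing the block-diagonal joint effects over all but one party's outcomes shows that each $E^{(i)}_{a_i|x_i}\otimes\id$ is itself block-diagonal with upper-left block $p(a_i|x_i)\id_r$, and since block-diagonal operators multiply block-wise (the commutativity you invoke is automatic from the disjoint tensor factors), the upper-left block of the joint effect is $\prod_i p(a_i|x_i)\id_r$, which must equal $p(\vec{a}|\vec{x})\id_r$. This marginalization route never needs a product state inside $\text{supp}(\rho)$, so it is somewhat more robust than the paper's entry-wise computation; the only small caveat is that "scalar action on a subspace" alone would not give invariance of that subspace --- what gives it here is the vanishing of the off-diagonal blocks supplied by \autoref{thm:4} (and by \autoref{lem:zero} for the zero-probability effects), which your summation argument correctly inherits. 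The paper additionally remarks that extremality plus factorizability forces determinism, but your direct contradiction with Bell-nonlocality (a product distribution is Bell-local) already suffices.
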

		\begin{proof}
		Let $\{p(v)\}_{v\in V(H_N)}$ denote an extremal probabilistic model on an $N$-party Bell scenario $H_N$. Here each vertex $v\in V(H_N)$ is defined by a set of $N$-party outcomes that result from a set of $N$-party settings, \textit{i.e.}, we have the association $v\leftrightarrow(\vec{a}|\vec{x})$, where $\vec{x}=(x_1,\ldots,x_N)$ is a vector of measurement settings and $\vec{a}=(a_1,\ldots,a_N)$ is the vector of corresponding outcomes. For party $i\in\{1,2,\dots,N\}$, we denote its settings by $x_i\in\mathcal{X}_i$ and the outcomes for each setting $x_i$ by $a_i\in \mathcal{A}_{x_i}$.
		
		From \autoref{thm:4}, any quantum realization of the extremal probabilistic model must be of the form
		\begin{align*}
			\Bigl(\mathcal{H},\Lambda_r \oplus 0_{d-r}, \;\{\,p(\vec{a}|\vec{x})\id_r \oplus E^{(\vec{a}|\vec{x})}_{22}\,\}_{v\in V(H)}\Bigr),
		\end{align*}
		where $d=\textrm{dim}(\mathcal{H})$. Note that this realization allows us to choose an arbitrary $\Lambda_r$, for any given $r$ and $d$ ($1\leq r\leq d$), since the probabilistic model is independent of this choice. 
		Hence, without loss of generality, we can set
		\begin{align}
			\Lambda_r=\textrm{diag}(1,0,\dots,0),
		\end{align}
		so that the quantum state of interest is 
		\begin{align}
			\rho=\Lambda_r\oplus 0_{d-r}.
		\end{align}
		The constraint of locality of the measurements in a Bell scenario with $N$ parties requires that each POVM element factorizes as
		\begin{align}
			E_{\vec{a}|\vec{x}}
			:= E_{a_1|x_1} \otimes E_{a_2|x_2} \otimes \cdots \otimes E_{a_N|x_N},
		\end{align}
		where each positive operator $E_{a_i|x_i}$ is defined on a local Hilbert space $\mathcal{H}_i$ of dimension $d_i$ ($i\in\{1,2,\dots,N\}$) such that $\prod_{i=1}^Nd_i=d$.
		
		For each party $i\in\{1,2,\dots,N\}$, the operators $\{E_{a_i|x_i}\}_{a_i \in \mathcal{A}_{x_i}}$ form a POVM associated with measurement choice $x_i\in\mathcal{X}_i$, \textit{i.e.},
		\begin{align}
			\sum_{a_i \in \mathcal{A}_{x_i}} E_{a_i|x_i} = \mathbbm{1}
			\qquad \forall\, x_i\in\mathcal{X}_i.
		\end{align}
		Hence, we have that
		\begin{align}
			\sum_{\vec{a} \in \mathcal{A}_{\vec{x}}} 
			E_{\vec{a}\,|\,\vec{x}} = \mathbbm{1}
			\qquad \forall\, \vec{x}\in\mathcal{X}_1\times\mathcal{X}_2\times\dots\times\mathcal{X}_N,    
		\end{align}
		with $\mathcal{A}_{\vec{x}}:=\mathcal{A}_{x_1}\times\cdots\times \mathcal{A}_{x_N}$. Given the particular quantum realization we are working with, we have 
		\begin{align}
			E_{\vec{a}|\vec{x}}=p(\vec{a}|\vec{x})\id_r\oplus E_{22}^{(\vec{a}|
				\vec{x})}.
		\end{align}
		On computing the probabilities, this gives us 
		\begin{align}
			&\Tr(\rho E_{\vec{a}|\vec{x}})=p(\vec{a}|\vec{x}). 
		\end{align}
		Expanding the left-hand-side of this equation, we have 
		\begin{align}
			&\Tr(\rho E_{a_1|x_1} \otimes E_{a_2|x_2} \otimes \cdots \otimes E_{a_N|x_N})\nonumber\\
			&= [E_{a_1|x_1}]_{11} [E_{a_2|x_2}]_{11}\dots[E_{a_N|x_N}]_{11} \nonumber \\
			&= \prod_{i=1}^Np(a_i|x_i),
		\end{align}
		where 
		\begin{align}
			p(a_i|x_i):=\Tr(\rho E_{a_i|x_i}\otimes \id_{d/d_i})=[E_{a_i|x_i}]_{11},\,\forall x_i,a_i,i.
		\end{align}
		
		We must therefore have 
		\begin{align}
			p(\vec{a}|\vec{x}) = p(a_1|x_1)\, p(a_2|x_2) \cdots p(a_N|x_N)
		\end{align}
		satisfying
		\begin{align}
			\sum_{a_i \in \mathcal{A}_{x_i}} p(a_i|x_i) = 1,
			\qquad \forall\, x_i,
		\end{align}
		
		\begin{align}
			\sum_{\vec{a} \in \mathcal{A}_{\vec{x}}} 
			p(\vec{a}\,|\,\vec{x}) = 1,
			\qquad \forall\, \vec{x}.
		\end{align}
		Hence, any extremal probabilistic model that admits a quantum realization factorizes into a product of local probability distributions and must, therefore, be deterministic, \textit{i.e.}, a vertex of the Bell polytope. This rules out the possibility of quantum realizations of Bell-nonlocal vertices of the non-signalling polytope since all these vertices are extremal indeterministic probabilistic models on $H_N$ (which in turn are not factorizable as products of local probability distributions).\\
	\end{proof}
	
	\section{Projective vs.~Non-projective realizations: the case of Naimark dilations}
	
	Naimark's dilation theorem \cite{HFR14} states that we can always view a POVM as the compression of a PVM on a higher-dimensional Hilbert space, \textit{i.e.}, for any POVM $\{E_a\}_a$ on some Hilbert space $\mathcal{H}$, there exists a PVM $\{\Pi_a\}_a$ on a Hilbert space $\mathcal{K}$ (where $\textrm{dim}(\mathcal{K})\geq \textrm{dim}(\mathcal{H})$) such that 
	\begin{align}
		&\forall a: E_a=V^{\dagger}\Pi_a V,\\
		&\textrm{where }V:\mathcal{H}\rightarrow\mathcal{K}\textrm{ is an isometry,}\nonumber\\
		&\textit{i.e., } V^{\dagger}V=\id_{\mathcal{H}}.
	\end{align} 
	Hence, for any given quantum state $\rho$ on $\mathcal{H}$, we have that 
	\begin{align}
		\Tr_{\mathcal{H}}(\rho E_a)=\Tr_{\mathcal{H}}(\rho V^{\dagger}\Pi_a V)=\Tr_{\mathcal{K}}(V\rho V^{\dagger}\Pi_a).
	\end{align}
	
	In a Bell scenario, allowing parties to implement local POVMs does not lead to a more general set of correlations than allowing them to only implement local PVMs as long as no restriction is put on the local Hilbert space dimensions, \textit{i.e.}, both cases lead to the same set of correlations. This is because one can always dilate the local POVMs for each party to local PVMs and given the local isometries that accomplish this, namely, $V_{x_k}:\mathcal{H}_k\rightarrow \mathcal{K}_{k}$ (where  $V^{\dagger}_{x_k} V_{x_k}=\id_{\mathcal{H}_k}$) for all $k\in\{1,2,\dots,N\}$, we can define 
	\begin{align}
		&V_{\vec{x}}:=V_{x_1}\otimes V_{x_2}\otimes\dots\otimes  V_{x_N}: \bigotimes_{k=1}^N\mathcal{H}_k\rightarrow \bigotimes_{k=1}^N\mathcal{K}_{k},\\
		&\sigma_{\vec{x}}:=V_{\vec{x}}\rho V_{\vec{x}}^{\dagger},\quad \Pi_{a_k|x_k}:=V_{x_k} E_{a_k|x_k} V^{\dagger}_{x_k},
	\end{align}
	such that
	\begin{align}
		\forall \vec{a}, \vec{x}:&\Tr(\rho E_{a_1|x_1}\otimes E_{a_2|x_2}\otimes\dots\otimes E_{a_N|x_N})\nonumber\\
		=&\Tr(\sigma_{\vec{x}}\Pi_{a_1|x_1}\otimes\Pi_{a_2|x_2}\otimes\dots\otimes\Pi_{a_N|x_N}).
	\end{align}
	
	The above reasoning does not, however, go through for arbitrary contextuality scenarios, \textit{i.e.}, in this general case the set of probabilistic models possible with POVMs is strictly larger than those possible with PVMs, even without any restriction on the Hilbert space dimension.  This is because in an arbitrary contextuality scenario, unlike a Bell scenario, there is no \textit{a priori} tensor product decomposition of the Hilbert space that is specified by the scenario. In Bell scenarios it is precisely this tensor product decomposition which guarantees that the most general global isometries implementable by the parties---each constrained to act non-trivially only on its local Hilbert space---are given by tensor products of local isometries. This ensures that such isometries cannot lead to a violation of the nonsignaling constraints, \textit{i.e.,} the PVMs they define respect the same nonsignaling constraints that are respected by the starting set of POVMs (of which the PVMs are Naimark dilations). In terms of the hypergraph associated with the Bell scenario, this means that each vertex $v\leftrightarrow(\vec{a}|\vec{x})$ that is assigned some positive operator $E_{v}$ in the POVM realization is assigned a projection $\Pi_v$ that is independent of any particular hyperedge $e\in E(H)$ that contains $v$.
	
	In general contextuality scenarios, by contrast, the isometries that follow from Naimark's dilation theorem can act non-trivially on the full Hilbert space since there is no notion of ``parties" that factors this Hilbert space into a tensor product of local Hilbert spaces. More concretely, we have that for each hyperedge $e\in E(H)$ with associated POVM $\{E_v\}_{v\in e}$, we can define an isometry $V_e$ such that 
	\begin{align}
		\{\Pi_v\}_{v\in e}=\{V_eE_v V_e^{\dagger}\}_{v\in e}
	\end{align} 
	is a PVM. However, since each $v$ may appear in multiple hyperedges, these isometries are subject to additional constraints, namely,
	\begin{align}
		&\forall v\in e\cap e'\textrm{ where }e,e'\in E(H):\nonumber\\
		&\Tr(\sigma V_eE_v V_e^{\dagger})=\Tr(\sigma V_{e'}E_v V_{e'}^{\dagger}), \forall\sigma\in\mathcal{B}(\mathcal{K}),
	\end{align}
	which, in turn, is equivalent to the constraint that 
	\begin{align}
		&\forall v\in e\cap e'\textrm{ where }e,e'\in E(H):\nonumber\\
		&V_eE_v V_e^{\dagger}=V_{e'}E_v V_{e'}^{\dagger}.
	\end{align}
	Naimark's dilation theorem does not guarantee the existence of isometries that satisfy these additional constraints while preserving the probabilistic model. Hence, we could have a situation where the only possible isometries $V_e$ and $V_{e'}$ for distinct hyperedges $e,e'$ containing $v$ are such that
	\begin{align}
		V_eE_v V_e^{\dagger}\neq V_{e'}E_v V_{e'}^{\dagger},
	\end{align} 
	thereby demonstrating the impossibility of achieving a probabilistic model realized by POVMs on some Hilbert space $\mathcal{H}$ by PVMs on some Hilbert space $\mathcal{K}$,\textit{ i.e.}, a probabilistic model that his realizable by POVMs but is not a quantum model in the sense of being realizable by PVMs. \autoref{thm:4} and \autoref{cor:ctx} demonstrate that extremal indeterministic models on a contextuality scenario are examples of such probabilistic models. Non-extremal examples of such probabilistic models can be obtained by considering contextuality scenarios that arise from POVMs whose joint measurability structure \cite{KHF14} is not reproducible by PVMs. We discuss some of these contextuality scenarios below.
	
	\subsection*{Non-extremal probabilistic models admitting a non-trivial quantum realization outside $\mathcal{Q}(H)$}
	To our knowledge, the first example of a non-extremal probabilistic model $p\in\mathcal{G}(H)$ on a contextuality scenario $H$ such that $p\notin\mathcal{Q}(H)$ was obtained in Ref.~\cite{KG14}. This contextuality scenario (\textit{cf.}~Ref.~\cite{Kunjwal15} for a discussion of the probabilistic models on it) is obtained from three dichotomic POVMs that are pairwise compatible but triplewise incompatible, realizing a set of (in)compatibility relations termed ``Specker's scenario" \cite{Specker60,Specker60_1,Specker60_2, LSW11}. Ref.~\cite{KG14} showed a violation of the Liang-Spekkens-Wiseman (LSW) inequality \cite{LSW11,Kunjwal15} using qubit POVMs, and in doing so, it demonstrated the existence of a non-extremal probabilistic model on the corresponding contextuality scenario that is not a quantum model. The fact that such a non-extremal probabilistic model cannot be a quantum model follows from the observation that pairwise compatible PVMs are also triplewise compatible and cannot, therefore, produce correlations outside the set of classical models $\mathcal{C}(H)$ (\textit{cf.}~\autoref{def:classmod}). Another example of a non-extremal probabilistic model that is not a quantum model was demonstrated in Ref.~\cite{GKS18}, albeit no non-trivial quantum realization of it was provided in Ref.~\cite{GKS18}. We conjecture that a non-trivial quantum realization of the probabilistic model in Ref.~\cite{GKS18} is indeed possible.
	
	We mention the above examples from the literature to emphasize that it is not \textit{a priori} ``pathological" to consider probabilistic models that arise from POVMs \cite{HS15}. What is pathological is to apply the notion of KS-noncontextuality to POVMs since that leads to absurdities such as trivial POVMs achieving maximal KS-contextuality in the sense of realizing all extremal probabilistic models (see Appendix C of Ref.~\cite{Kunjwal19} for a discussion of this specific aspect). We refer the interested reader to Refs.~\cite{Spekkens14,KS15,Kunjwal16, KS18,Kunjwal19} for a discussion of why generalized contextuality, rather than KS-contextuality, is the appropriate notion of nonclassicality when considering the most general set of operations (including POVMs) in quantum theory. For the specific case of generalized contextuality tailored to scenarios inspired by KS-contextuality---rather than arbitrary prepare-and-measure scenarios---we refer the interested reader to the frameworks developed in Refs.~\cite{KS15,KS18,Kunjwal19,Kunjwal20}.
	
	\section{Conclusion}
	
	We have uncovered a structural constraint on the intrinsic indeterminism of quantum theory by proving that such indeterminism cannot be extremal for any set of quantum measurements that extract some non-trivial information about the quantum state being measured.
	It seems to be a general feature of quantum theory that extremal correlations consistent with some general physical principle that are achievable by quantum theory are also achievable in its classical deterministic limit, whether it is in the case of Bell scenarios \cite{RTH16} (the principle here being nonsignaling), in contextuality scenarios (\autoref{thm:4}) (the principle here being nondisturbance or consistency of marginals \cite{RSK12,CF12,BCG22}), or even in scenarios with indefinite causal order \cite{KO23} where standard  quantum theory is extended to process matrices \cite{OCB12} with process functions \cite{BW16,BW16fp} as their classical deterministic limit \cite{KO23,KO24,KB23} (only positivity and normalization being assumed in this case). We take our results as further cementing the evidence that this property of quantum correlations---namely, they are convexly extremal in the full space of correlations consistent with some physical principle if and only if they are achievable in the classical deterministic limit---should be taken seriously in any future axiomatization of quantum theory or in theories that aim to extend quantum theory in physically reasonable ways \cite{Hardy01,Hardy05,Barrett07,PPK09,NW10,CDP10, MM11,OCB12,CDP13,FSA13,NGH15,HS15,CSFoils,GKS18}. 
	Furthermore, the possibility of non-trivial quantum realizations of probabilistic models that fail to be quantum models \cite{LSW11,KHF14,KG14,Spekkens14} calls for further investigation of how such probabilistic models might be put to use in quantum information protocols. These protocols could, for example, be built on top of quantum realizations of those joint measurability structures \cite{KHF14, AK20} that yield contextuality scenarios where the set of quantum models coincides with the set of classical models \cite{GKS18}.\footnote{Mathematically, such joint measurability structures correspond to compatibility hypergraphs \cite{KHF14} that are not graphs \cite{HFR14}.} The appropriate notion of nonclassicality to use in such protocols would be that of generalized contextuality, which admits a natural characterization in the language of generalized probabilistic theories \cite{SSW21,SSWS23,SSWSK23,SWS24,SBS24,ZSS25}.
	
	\ack
	R.K.~thanks Tobias Fritz for discussions. This work received support from the French government under the France 2030 investment plan, as part of the Initiative d'Excellence d'Aix-Marseille Université-A*MIDEX, AMX-22-CEI-01.
	
	\appendix

\section{Proofs of key lemmas}\label{app:proofs}
Before we prove \autoref{lem:zero}, we recall and prove the following standard result.
\begin{lemma}\label{lem:tracezero}
	Let $A$ and $B$ be two positive semidefinite matrices. If $\operatorname{Tr}(AB) = 0$, then $AB = 0$.
\end{lemma}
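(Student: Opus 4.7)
The plan is to exploit the existence of a unique positive semidefinite square root for any positive semidefinite matrix, and then reduce the trace condition to a Frobenius norm. Write $A = \sqrt{A}\sqrt{A}$ and $B = \sqrt{B}\sqrt{B}$, where $\sqrt{A},\sqrt{B}$ are the (Hermitian) positive semidefinite square roots. The key identity is obtained by cycling the trace:
\begin{align}
\operatorname{Tr}(AB) &= \operatorname{Tr}(\sqrt{A}\sqrt{A}\sqrt{B}\sqrt{B}) \nonumber\\
&= \operatorname{Tr}\bigl((\sqrt{A}\sqrt{B})^{\dagger}(\sqrt{A}\sqrt{B})\bigr) \nonumber\\
&= \|\sqrt{A}\sqrt{B}\|_F^{2},
\end{align}
using $(\sqrt{A})^{\dagger}=\sqrt{A}$ and $(\sqrt{B})^{\dagger}=\sqrt{B}$.

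Next, from the hypothesis $\operatorname{Tr}(AB)=0$ and positive definiteness of the Frobenius norm, I conclude $\sqrt{A}\sqrt{B}=0$. Multiplying on the left by $\sqrt{A}$ and on the right by $\sqrt{B}$ gives
\begin{align}
AB = \sqrt{A}\,(\sqrt{A}\sqrt{B})\,\sqrt{B} = 0,
\end{align}
which is the desired conclusion.

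There is no real obstacle: the only nontrivial ingredient is the existence of the positive semidefinite square root (standard spectral theorem for Hermitian matrices) and the fact that the Frobenius norm vanishes only on the zero matrix. A minor stylistic choice is whether to phrase the argument via square roots as above, or equivalently via Cholesky-type factorizations $A=C^{\dagger}C$, $B=D^{\dagger}D$ and the observation $\operatorname{Tr}(AB)=\|CD^{\dagger}\|_F^{2}$; I would use the square-root version because it keeps the algebra symmetric and avoids introducing extra notation.
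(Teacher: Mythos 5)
Your proof is correct and follows essentially the same route as the paper: both define $C=A^{1/2}B^{1/2}$, identify $\operatorname{Tr}(AB)$ with $\operatorname{Tr}(C^{\dagger}C)$ (the squared Frobenius norm), conclude $C=0$, and then sandwich $C$ between $A^{1/2}$ and $B^{1/2}$ to obtain $AB=0$. No differences worth noting.
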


\begin{proof}
	We have
	\begin{align}
		\operatorname{Tr}(AB) &= \operatorname{Tr}(BA) \notag \\
		&= \operatorname{Tr}(B^{1/2} B^{1/2} A) \notag \\
		&= \operatorname{Tr}(B^{1/2} A B^{1/2}) \notag \\
		&= \operatorname{Tr}(B^{1/2} A^{1/2} A^{1/2} B^{1/2}). 
	\end{align}
	
	Let $C = A^{1/2} B^{1/2}$. Then
	\begin{align}
		\operatorname{Tr}(C^{\dagger} C) 
		&= \operatorname{Tr}(B^{1/2} A^{1/2} A^{1/2} B^{1/2})= 0.
	\end{align}
	
	Since $C^{\dagger} C \geq 0$ and $\Tr(C^{\dagger} C) = 0$, it follows that
	\begin{align}
		C^{\dagger} C = 0 \Rightarrow C = 0.
	\end{align}
	
	Hence,
	\begin{align*}
		AB &= A^{1/2} (A^{1/2} B^{1/2}) B^{1/2} = A^{1/2} C B^{1/2} = 0.
	\end{align*}
\end{proof}

\subsection*{Proof of \autoref{lem:zero}}
\begin{proof}
	Since both $\rho$ and $E$ are positive semidefinite, we have from \autoref{lem:tracezero} that 
	$\Tr(\rho E)=0\Rightarrow \rho E=0$. Since we are working in the representation where 
	\begin{align*}
		\rho = \begin{pmatrix}
			\Lambda_r & 0 \\
			0 & 0
		\end{pmatrix},
	\end{align*}
	\begin{align*}
		E = \begin{pmatrix}
			E_{11} & E_{12} \\
			E_{21} & E_{22}
		\end{pmatrix},
	\end{align*}
	and
	\begin{align}
		\rho E &= \begin{pmatrix}
			\Lambda_r E_{11} & \Lambda_r E_{12} \\
			0 & 0
		\end{pmatrix},
	\end{align}
	$\rho E = 0$ gives us the following block-wise equations
	\begin{align}
		\Lambda_r E_{11} = 0, \quad \Lambda_r E_{12} = 0.
	\end{align}
	Since every diagonal entry $\lambda_i > 0$, the diagonal matrix $\Lambda_r$ is invertible. Multiplying the block-wise equations on the left by $\Lambda_r^{-1}$ yields
	\begin{align*}
		E_{11} = 0, \quad E_{12} = 0. 
	\end{align*}
	Because $E$ is Hermitian, $E_{21} = E_{12}^{\dagger} = 0$. Hence,
	\begin{align*}
		E = \begin{pmatrix}
			0 & 0 \\
			0 & E_{22}
		\end{pmatrix},
	\end{align*}
	which completes the proof.
\end{proof}

\subsection*{Proof of \autoref{lem:trivial}}	
\begin{proof}
	The `if' direction is obvious: given $E=\alpha\id$ we have $\Tr(\rho E)=\alpha$ for all $\rho$.
	
	For the `only if' direction, consider the orthonormal basis $\{\ket{i}\}_{i=0}^{d-1}$ of $\mathcal{H}$ that diagonalizes $E$, \textit{i.e.}, 
	\begin{align}
		E=\sum_i\mu_i\ketbra{i},
	\end{align}
	with eigenvalues $\{\mu_i\}_{i=0}^{d-1}$. Suppose, for the sake of contradiction, that $E\neq \alpha\id$. Then there exists some $k\in\{0,1,\dots,d-1\}$ such that $\mu_k\neq\alpha$. Then we have, for $\rho=\ketbra{k}$, that 
	\begin{align}
		\Tr(\ketbra{k}E)=\mu_k\neq\alpha,
	\end{align}
	contradiction our assumption that $\Tr(\rho E)=\alpha$ for all $\rho$. Hence, we must have $E=\alpha\id$.
\end{proof}

\subsection*{Proof of \autoref{lem:unique}}

\begin{proof}
	Any assignment of effects $\{E_v\}_{v\in V(H)}$ to the contextuality scenario $H$ implies that all probabilistic models given by $\{\Tr(\rho E_v)\}_{v\in V(H)}$ are in the set $\mathcal{G}(H)$. 
	
	Uniqueness of the probabilistic model on $H$ implies that all quantum states must yield the same probabilistic model $p$, \textit{i.e.},
	\begin{align}
		\Tr(\rho E_v)=p(v)\quad\forall\rho, \forall v\in V(H).
	\end{align}
	
	In a contextuality scenario with exactly one probabilistic model both of the above conditions must hold. That is, we must have that the effects $\{E_v\}_{v\in V(H)}$ are such that for any choice of quantum state $\rho$ the fixed set of probabilities given by the unique probabilistic model results, \textit{i.e.},
	
	\begin{align}
		\forall \rho: p(v)=\Tr(\rho E_v) \quad \forall v\in V(H).
	\end{align}
	
	Then from \autoref{lem:trivial}, we immediately have that $E_v=p(v)\id$ for all $v\in V(H)$.
\end{proof}

\subsection*{Proof of \autoref{lem:incidence}}	
\begin{proof}
	We prove this by contradiction, \textit{i.e.}, if there exists a non-trivial solution ($x\neq \textbf{0}$), then the contextuality scenario cannot have a unique probabilistic model. 
	
	We have $Ap=\textbf{1}$ for the unique probabilistic model on the contextuality scenario with incidence matrix $A$. Suppose $\exists\, x \neq \textbf{0}$ such that $Ax = \textbf{0}$. Then, for any $c \in \mathbb{R}$,
	\begin{align}
		A(p + cx) = \textbf{1}.
	\end{align}	
	The vector $p + cx$ is a probabilistic model if and only if
	\begin{align}\label{eq:nonunique}
		0 \leq p(v) + c x(v) \leq 1 
		\quad \forall v \in V(H).
	\end{align}
 	
 	Let us first consider the case where the set $S_1:=\{v\in V(H): p(v)=1\}$ is non-empty. From $Ap=\textbf{1}$ and the fact that $p(v)>0$ for all $v\in V(H)$, we have that any $v\in S_1$ appears in exactly one hyperedge $e\in E(H)$ which contains no other vertex $v'\neq v$. Hence, the constraint $A(p + cx)=\textbf{1}$ gives us  
 	\begin{align}
 		1 + c x(v)=1 
 		\quad \forall v \in S_1,
 	\end{align}
 	where $c\in\mathbb{R}$ is arbitrary. In particular, using $c=1$, we have that 
 	\begin{align}
 		&1 + x(v)=1 
 		\quad \forall v \in S_1\\
 		&\Leftrightarrow x(v)=0
 		\quad \forall v \in S_1.
 	\end{align}
 	This, combined with the fact that $p(v)>0$ for all $v\in V(H)$, gives us for any $x$ satisfying $Ax=\textbf{0}$ that
 	 \begin{align}\label{eq:pnot1}
 	 	x(v)\neq 0 \Leftrightarrow 0<p(v)<1.
 	 \end{align}

We can now return to the condition of Eq.~\eqref{eq:nonunique} and rewrite it in the following equivalent form:
	\begin{align}
		-\frac{p(v)}{x(v)} \;\leq\; c \;\leq\; \frac{1-p(v)}{x(v)}, 
		&\quad \text{when } x(v) > 0, \\
		-\frac{p(v)}{x(v)} \;\geq\; c \;\geq\; \frac{1-p(v)}{x(v)}, 
		&\quad \text{when } x(v) < 0,
	\end{align}	
	where, by Eq.~\eqref{eq:pnot1}, we have that $0<p(v)<1$. Combining these bounds, we obtain
	\begin{equation}
		\begin{aligned}
			&\max_{v \in V(H):\, x(v) \neq 0} 
			\Bigg\{ - \min\!\left( \tfrac{1-p(v)}{|x(v)|}, \tfrac{p(v)}{|x(v)|} \right) \Bigg\} \\
			&\hspace{3cm} \leq c \leq \\
			&\min_{v \in V(H):\, x(v)\neq 0} 
			\Bigg\{ \min\!\left( \tfrac{1-p(v)}{|x(v)|}, \tfrac{p(v)}{|x(v)|} \right) \Bigg\}.
		\end{aligned} 
	\end{equation}
	
	Define
	\begin{align}
		c_0 := \min_{v \in V(H):\, x(v) \neq 0} 
		\min\!\left\{\tfrac{1-p(v)}{|x(v)|}, \tfrac{p(v)}{|x(v)|}\right\}. 
	\end{align}
	Clearly $c_0 > 0$. Hence both $p$ and $p + c_0 x$ are distinct and valid probabilistic models, \textit{i.e.},
	\begin{align*}
		Ap = \textbf{1} 
		\quad \text{and} \quad
		A(p + c_0 x) = \textbf{1}.
	\end{align*}
	
	This contradicts uniqueness of $p\in\mathcal{G}(H)$. Therefore, no solution $x\neq \textbf{0}$ exists for the equation $Ax=\textbf{0}$ and we must have $x=\textbf{0}$.
\end{proof}
\vspace{1cm}
\bibliographystyle{apsrev4-2}
\bibliography{masterbibfilev2}
\end{document}